\def\op#1{\mathop{{\it\fam0} #1}\limits}
\newcommand{\beq}{\begin{equation}}
\newcommand{\eeq}{\end{equation}}
\newcommand{\ben}{\begin{eqnarray}}
\newcommand{\een}{\end{eqnarray}}
\newcommand{\be}{\begin{eqnarray*}}
\newcommand{\ee}{\end{eqnarray*}}
\newcommand{\bea}{\begin{eqalph}}
\newcommand{\eea}{\end{eqalph}}
\newcommand{\bL}{{\mathbf L}}
\newcommand{\cD}{{\mathcal D}}
\newcommand{\la}{\lambda}
\newcommand{\m}{\mu}
\newcommand{\vt}{\vartheta}
\newcommand{\cG}{{\mathfrak g}}
\newcommand{\up}{\upsilon}
\newcommand{\si}{\sigma}
\newcommand{\Si}{\Sigma}
\newcommand{\w}{\wedge}
\newcommand{\wt}{\widetilde}
\newcommand{\ol}{\overline}
\newcommand{\dr}{\partial}
\newcommand{\ar}{\op\longrightarrow}
\newcommand{\ot}{\otimes}
\newenvironment{eqalph}{\stepcounter{equation}
\setcounter{equationa}{\value{equation}} \setcounter{equation}{0}

\begin{eqnarray}}{\end{eqnarray}\setcounter{equation}{\value{equationa}}}
\newcounter{equationa}
\newcounter{remark}
\newcounter{example}
\newcounter{theorem}
\newcounter{proposition}
\newcounter{lemma}
\newcounter{corollary}
\newcounter{definition}
\def\theremark{\arabic{remark}}
\def\thedefinition{\arabic{theorem}}
\newenvironment{proof}{{\it Proof.}}{\hfill $\Box$
\medskip }
\newenvironment{theorem}{\refstepcounter{theorem} \medskip{\bf
Theorem \thedefinition.}\it}{\medskip }
\newcommand{\mar}[1]{}
\begin{document}

\hbox{}

\begin{center}

{\Large\bf Theory of Classical Higgs Fields. II. Lagrangians}

\bigskip

G. SARDANASHVILY, A. KUROV

\medskip

Department of Theoretical Physics, Moscow State University, Russia

\bigskip

\end{center}

\begin{abstract}
We consider classical gauge theory with spontaneous symmetry
breaking on a principal bundle $P\to X$ whose structure group $G$
is reducible to a closed subgroup $H$, and sections of the
quotient bundle $P/H\to X$ are treated as classical Higgs fields.
In this theory, matter fields with an exact symmetry group $H$ are
described by sections of a composite bundle $Y\to P/H\to X$. We
show that their gauge $G$-invariant Lagrangian necessarily
factorizes through a vertical covariant differential on $Y$
defined by a principal connection on an $H$-principal bundle $P\to
P/H$ (Theorems 5 and 6).
\end{abstract}

\bigskip

Following our previous work \cite{higgs13}, we consider classical
gauge theory on a principal bundle $P\to X$ with a structure Lie
group $G$ which is reducible to its closed subgroup $H$, i.e., $P$
admits reduced principal subbundles possessing a structure group
$H$.

Given a closed (and, consequently, Lie) subgroup $H\subset G$, we
have a composite bundle
\mar{b3223a}\beq
P\to P/H\to X, \label{b3223a}
\eeq
where
\mar{b3194}\beq
 P_\Si=P\ar P/H \label{b3194}
\eeq
is a principal bundle with a structure group $H$ and
\mar{b3193}\beq
\Si=P/H\ar X \label{b3193}
\eeq
is a $P$-associated bundle with a typical fibre $G/H$ which a
structure group $G$ acts on by left multiplications. In accordance
with the well-known theorem \cite{book09,ste}, there is one-to-one
correspondence between the global sections $h$ of the quotient
bundle (\ref{b3193}) and the reduced $H$-principal subbundles
$P^h$ of $P$ which are the restriction
\mar{001}\beq
P^h=h^*P_\Si \label{001}
\eeq
of the $H$-principal bundle $P_\Si$ (\ref{b3194}) to $h(X)\subset
\Si$. In classical gauge theory, global sections of the quotient
bundle  (\ref{b3193}) are treated as classical Higgs fields
\cite{book09,sard06a,sard14}.

A question is how to describe matter fields in gauge theory with a
structure group $G$ if they admit only an exact symmetry subgroup
$H$. In particular, this is the case of spinor fields in
gravitation theory \cite{iva,sard11}.

We have shown that such matter fields are represented by sections
of a composite bundle
\mar{b3225}\beq
\pi_{YX}: Y\ar \Si\ar X \label{b3225}
\eeq
where $Y\to \Si$ is a $P_\Si$-associated bundle
\mar{bnn}\beq
Y=(P\times V)/H \label{bnn}
\eeq
with a structure group $H$ acting on its typical fibre $V$ on the
left \cite{book09,sard06a,sard14}. Given a global section $h$ of
the fibre bundle $\Si\to X$ (\ref{b3193}), the restriction
\mar{b3226}\beq
Y^h=h^*Y=(h^*P\times V)/H =(P^h\times V)/H \label{b3226}
\eeq
of a fibre bundle $Y\to\Si$ to $h(X)\subset \Si$ is a fibre bundle
associated with the reduced $H$-principal subbundle $P^h$
(\ref{001}) of a $G$-principal bundle $P$. Sections of the fibre
bundle $Y^h\to X$ (\ref{b3226}) describe matter fields in the
presence of a Higgs field $h$. A key point is that the composite
bundle $\pi_{YX}:Y\to X$ (\ref{b3225}) is proved to be a
$P$-associated bundle
\be
Y=(P\times (G\times V)/H)/G
\ee
with a structure group $G$ \cite{book09,higgs13,sard14}. Its
typical fibre is a fibre bundle
\mar{wes}\beq
\pi_{WH}:W=(G\times V)/H \to G/H \label{wes}
\eeq
associated with an $H$-principal bundle $G\to G/H$. A structure
group $G$ acts on the $W$ (\ref{wes}) by the induced
representation \cite{mack}:
\mar{iik}\beq
g: (G\times V)/H \to (gG\times V)/H. \label{iik}
\eeq

This fact enables one to describe matter fields with an exact
symmetry group $H\subset G$ in the framework of gauge theory on a
$G$-principal bundle $P\to X$ if its structure group $G$ is
reducible to $H$. Here, we aim to show that their gauge
$G$-invariant Lagrangian necessarily factorizes through a vertical
covariant differential on $Y$ defined by an $H$-principal
connection on $P\to P/H$ (Theorems \ref{010} and \ref{tmp50}).

A problem is that, though the $P$-associated composite bundle
$Y\to X$ (\ref{b3225} can be endowed with a principal connection
on a $G$-principal bundle $P\to X$, such a connection need not be
reducible to principal connections on reduced $H$-principal
subbundles $P^h$, unless the following condition holds
\cite{book09,kob}.

\begin{theorem} \label{redt} \mar{redt}
Let a Lie algebra $\cG$ of $G$ be a direct sum
\mar{g13}\beq
\cG = {\mathfrak h} \oplus {\mathfrak f} \label{g13}
\eeq
of a Lie algebra ${\mathfrak h}$ of $H$ and its supplement
$\mathfrak f$ obeying the commutation relations
\be
[{\mathfrak f},{\mathfrak f}]\subset {\mathfrak h}, \qquad
[{\mathfrak f},{\mathfrak h}]\subset \mathfrak f.
\ee
(e.g., $H$ is a Cartan subgroup of $G$). Let $A$ be a principal
connection on $P$. The ${\mathfrak h}$-valued component $\ol A^h$
of its pull-back onto a reduced $H$-principal subbundle $P^h$ is a
principal connection on $P^h$.
\end{theorem}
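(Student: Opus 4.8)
The plan is to check directly that $\bar A^h$ fulfils the two defining axioms of a principal connection on the $H$-bundle $P^h=h^*P_\Si$: that it reproduces the fundamental vector fields generated by ${\mathfrak h}$, and that it is equivariant under the right $H$-action. Let $\omega$ denote the $\cG$-valued connection form of $A$ and $\iota:P^h\hookrightarrow P$ the inclusion of the reduced subbundle. The pull-back $\iota^*\omega$ is a $\cG$-valued one-form on $P^h$, and relative to the splitting $\cG={\mathfrak h}\oplus{\mathfrak f}$ it decomposes as $\iota^*\omega=\bar A^h+\bar A^{\mathfrak f}$ with $\bar A^h=\mathrm{pr}_{\mathfrak h}\circ\iota^*\omega$; this projected form is precisely the object I must show is a connection form.

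The decisive step, and the only one carrying genuine content, is the algebraic fact that the splitting $\cG={\mathfrak h}\oplus{\mathfrak f}$ is $\mathrm{Ad}(H)$-invariant. Invariance of ${\mathfrak h}$ is automatic, being the Lie algebra of $H$. Invariance of ${\mathfrak f}$ is where the hypothesis enters: the relation $[{\mathfrak h},{\mathfrak f}]\subset{\mathfrak f}$ says $\mathrm{ad}_X{\mathfrak f}\subset{\mathfrak f}$ for all $X\in{\mathfrak h}$, and exponentiation upgrades this to $\mathrm{Ad}_h{\mathfrak f}\subset{\mathfrak f}$, hence $\mathrm{Ad}_h{\mathfrak f}={\mathfrak f}$, for $h\in H$. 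I would stress that only this reductive condition is used; the second relation $[{\mathfrak f},{\mathfrak f}]\subset{\mathfrak h}$ is irrelevant to the present statement. The payoff I extract is the intertwining identity $\mathrm{pr}_{\mathfrak h}\circ\mathrm{Ad}_h=\mathrm{Ad}_h\circ\mathrm{pr}_{\mathfrak h}$ for every $h\in H$. The sole delicacy is the passage from the infinitesimal containment to the group level for components of $H$ not connected to the identity, which I regard as covered by the standing assumptions on the closed subgroup $H$; this is the one point where a reader might demand extra care, and I would flag it explicitly rather than bury it.

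With the intertwining in hand, the two axioms follow mechanically. For each $X\in{\mathfrak h}$ the fundamental field on $P^h$ is the restriction of the fundamental field of $X$ on $P$, so $\iota^*\omega$ sends it to $X\in{\mathfrak h}$, and $\mathrm{pr}_{\mathfrak h}$ fixes $X$; thus $\bar A^h$ reproduces the generators of the $H$-action. For equivariance, the right $H$-action on $P^h$ is the restriction of the $G$-action, so $R_h\circ\iota=\iota\circ R_h$ and the $G$-equivariance $(R_h)^*\omega=\mathrm{Ad}_{h^{-1}}\omega$ pulls back to $(R_h)^*\iota^*\omega=\mathrm{Ad}_{h^{-1}}\iota^*\omega$ for $h\in H$; applying $\mathrm{pr}_{\mathfrak h}$ and commuting it past $\mathrm{Ad}_{h^{-1}}$ via the intertwining relation gives $(R_h)^*\bar A^h=\mathrm{Ad}_{h^{-1}}\bar A^h$. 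These are exactly the connection-form axioms on $P^h$, so $\bar A^h$ is a principal connection there, as claimed.
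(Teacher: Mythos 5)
The paper offers no proof of this theorem: it is stated as a known result imported from Kobayashi--Nomizu and the authors' book (the citation immediately preceding the statement), so there is nothing in the text to compare your argument against line by line. What you have written is the standard proof of that cited result, and it is correct: the only substantive input is the $\mathrm{Ad}(H)$-invariance of the splitting $\cG={\mathfrak h}\oplus{\mathfrak f}$, obtained by exponentiating $[{\mathfrak h},{\mathfrak f}]\subset{\mathfrak f}$, after which the two connection-form axioms for $\mathrm{pr}_{\mathfrak h}\circ\iota^*\omega$ on $P^h$ follow by direct verification exactly as you describe. Your observation that $[{\mathfrak f},{\mathfrak f}]\subset{\mathfrak h}$ plays no role here is accurate; only the reductive condition is used. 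The caveat you flag about disconnected $H$ is also the genuinely delicate point: exponentiation gives $\mathrm{Ad}_h{\mathfrak f}={\mathfrak f}$ only for $h$ in the identity component, and the sharp hypothesis in the literature is $\mathrm{Ad}(H){\mathfrak f}\subset{\mathfrak f}$ itself, with the infinitesimal condition equivalent to it precisely when $H$ is connected. Since the paper's hypothesis is stated only at the Lie-algebra level, your explicit flagging of this gap is appropriate rather than pedantic; a fully rigorous version of the theorem should either assume $H$ connected or assume $\mathrm{Ad}(H)$-invariance of ${\mathfrak f}$ outright.
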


At the same time, connections on reduced $H$-principal subbundles
$P^h$ can be generated in a different way. Let $\Pi\to Z$ be a
principal bundle with a structure group $K$. Given a manifold map
$\phi:Z'\to Z$, a pull-back bundle $\phi^*\Pi\to Z'$ also is a
principal bundle with a structure group $K$. Let $A$ be a
principal connection on a principal bundle $\Pi\to Z$. Then the
pull-back connection $\phi^*A$ is a principal connection on
$\phi^*\Pi\to Z'$ \cite{kob}. The following is a corollary of this
fact.

\begin{theorem} \label{mos178}
Given the composite bundle (\ref{b3223a}), let $A_\Si$ be a
principal connection on the $H$-principal bundle $P\to\Si$
(\ref{b3194}). Then, for any reduced $H$ principal subbundle $P^h$
(\ref{001}) the pull-back  connection $h^*A_\Si$ is a principal
connection on $P^h$.
\end{theorem}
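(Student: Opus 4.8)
The plan is to apply directly the general pull-back-connection fact recalled immediately before the statement, specialized to the data of the composite bundle (\ref{b3223a}). First I would set, in the notation of that fact, the principal bundle $\Pi\to Z$ to be the $H$-principal bundle $P_\Si=P\to\Si$ of (\ref{b3194}), so that the structure group is $K=H$; take $Z'=X$; and take the manifold map $\phi:Z'\to Z$ to be the global Higgs section $h:X\to\Si$ of the quotient bundle (\ref{b3193}). Since $h$ is a smooth global section, it is in particular a smooth map $X\to\Si$, so the pull-back construction is applicable without any extra hypotheses.

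Next I would identify the pulled-back objects. By (\ref{001}) the reduced $H$-principal subbundle associated with $h$ is exactly the restriction $P^h=h^*P_\Si$, i.e. $P^h=\phi^*\Pi$ in the notation above, and this is again an $H$-principal bundle over $X$. The cited fact then asserts that the pull-back $\phi^*A=h^*A_\Si$ of the principal connection $A_\Si$ is a principal connection on $\phi^*\Pi=P^h$, which is precisely the assertion of the theorem. Thus the whole argument is a matter of correctly matching the abstract pull-back data to the concrete bundles $P_\Si$ and $P^h$.

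The only points that genuinely need checking are the two defining properties of a principal connection — $H$-equivariance of the connection form and reproduction of the fundamental vector fields of the $H$-action — for the pulled-back form $h^*A_\Si$ on $P^h$. Both are inherited automatically from $A_\Si$, since the right $H$-action on $P^h=h^*P_\Si$ is, by construction of the pull-back principal bundle, the restriction of the $H$-action on $P_\Si$, and pull-back of differential forms commutes with this action. I therefore expect no real obstacle: the statement is a direct specialization of the pull-back-connection theorem, and the section property of $h$ serves only to identify $h^*P_\Si$ with the reduced subbundle $P^h\subset P$ of (\ref{001}) rather than being needed for the connection-theoretic part of the argument.
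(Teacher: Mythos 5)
Your proposal is correct and matches the paper exactly: the paper itself presents this theorem as an immediate corollary of the general pull-back-connection fact stated just before it, with the same identifications $\Pi=P_\Si$, $Z=\Si$, $Z'=X$, $\phi=h$, $K=H$, and $P^h=h^*P_\Si$ via (\ref{001}). Your additional remarks on equivariance and reproduction of fundamental vector fields only spell out what the cited fact already guarantees.
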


Turn now to the composite bundle $Y$ (\ref{b3225}). Given an atlas
$\Psi_P$ of $P$, the associated quotient bundle $\Si\to X$
(\ref{b3193}) is provided with bundle coordinates $(x^\la,\si^m)$.
With this atlas and an atlas $\Psi_{Y\Si}$ of $Y\to \Si$, the
composite bundle $Y\to X$ (\ref{b3225}) is endowed with adapted
bundle coordinates $(x^\la,\si^m,y^i)$ where $(y^i)$ are fibre
coordinates on $Y\to\Si$. Then the following holds
\cite{book09,sard06a,sard14}.

\begin{theorem} \mar{LL4} \label{LL4}
Let
\mar{510f04}\beq
A_\Si=dx^\la\ot(\dr_\la + A^a_\la e_a) + d\si^m\ot(\dr_m + A^a_m
e_a) \label{510f04}
\eeq
be a principal connection on an $H$-principal bundle $P\to \Si$
where $\{e_a\}$ is a basis for a Lie algebra ${\mathfrak h}$ of
$H$. Let
\mar{510f05}\beq
A_{Y\Si}=dx^\la\ot(\dr_\la + A^a_\la(x^\m,\si^k) I_a^i\dr_i) +
d\si^m\ot(\dr_m + A^a_m (x^\m,\si^k)I^i_a\dr_i) \label{510f05}
\eeq
be an associated principal connection on $Y\to\Si$ where $\{I_a\}$
is a representation of a Lie algebra ${\mathfrak h}$ in $V$. Then,
for any subbundle $Y^h\to X$ (\ref{b3226}) of a composite bundle
$Y\to X$, the pull-back connection
\mar{510f06}\beq
A_h=h^*A_{Y\Si}= dx^\la\ot[\dr_\la + (A^a_m (x^\m,h^k)\dr_\la h^m
+A^a_\la(x^\m,h^k)) I_a^i\dr_i], \label{510f06}
\eeq
is a connection on $Y^h$ associated with the pull-back principal
connection $h^*A_\Si$ on the reduced $H$-principal subbundle $P^h$
in Theorem \ref{mos178}.
\end{theorem}

Any connection $A_\Si$ (\ref{510f04}) on a fibre bundle $Y\to\Si$
yields a first order differential operator, called the vertical
covariant differential,
\beq
\wt D: J^1Y\to T^*X\op\otimes_Y V_\Si Y, \qquad \wt D=
dx^\la\otimes(y^i_\la- A^i_\la -A^i_m\si^m_\la)\dr_i, \label{7.10}
\eeq
on a composite bundle $Y\to X$ where $V_\Si Y$ is the vertical
tangent bundle of $Y\to\Si$. It possesses the following important
property \cite{book09,sard06a,sard14}.

\begin{theorem} \label{tmp40}
For any section $h$ of a fibre bundle $\Si\to X$, the restriction
of the vertical differential $\wt D$ (\ref{7.10}) onto the fibre
bundle $Y^h$ (\ref{b3226}) coincides with a covariant differential
$D^{A_h}$  defined by the connection $A_h$ (\ref{510f06}) on
$Y^h$.
\end{theorem}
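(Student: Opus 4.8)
The plan is to prove the identity directly in the adapted coordinates $(x^\la,\si^m,y^i)$, by writing both operators explicitly and comparing coefficients. First I would record the component form of $\wt D$. Reading off the vertical part of the associated connection (\ref{510f05}), the coefficients in (\ref{7.10}) are $A^i_\la=A^a_\la I^i_a$ and $A^i_m=A^a_m I^i_a$, so that
\[
\wt D=dx^\la\ot\big(y^i_\la-A^a_\la I^i_a-A^a_m I^i_a\,\si^m_\la\big)\dr_i.
\]

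The structural input is the way $J^1Y^h$ embeds into $J^1Y$ over $h$. A section of $Y^h\to X$ is precisely a section $s$ of the composite bundle $Y\to X$ whose projection to $\Si$ is the fixed Higgs field, $\si^m\circ s=h^m$. Prolonging such a section therefore forces $\si^m_\la=\dr_\la h^m$, while $y^i$ and $y^i_\la$ survive as the fibre and jet coordinates on $J^1Y^h$. In other words, the canonical embedding $J^1Y^h\hookrightarrow J^1Y$ is given in coordinates by $\si^m\mto h^m(x)$, $\si^m_\la\mto\dr_\la h^m$, and restricting $\wt D$ onto $Y^h$ means composing with this embedding.

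Performing the substitution and collecting the $I^i_a$ term, the restricted operator becomes
\[
\wt D|_{Y^h}=dx^\la\ot\big(y^i_\la-\big(A^a_\la(x^\m,h^k)+A^a_m(x^\m,h^k)\,\dr_\la h^m\big)I^i_a\big)\dr_i.
\]
It then remains to compare this with the covariant differential of $A_h$. Applying the standard formula $D^\Gamma=dx^\la\ot(y^i_\la-\Gamma^i_\la)\dr_i$ to the pull-back connection (\ref{510f06}), whose coefficients are $\Gamma^i_\la=\big(A^a_m(x^\m,h^k)\dr_\la h^m+A^a_\la(x^\m,h^k)\big)I^i_a$, reproduces the same expression, whence $\wt D|_{Y^h}=D^{A_h}$. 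I would close by noting that along $Y^h$ the vertical tangent bundle $V_\Si Y$ restricts to the vertical tangent bundle of $Y^h\to X$, so the target of the restricted operator is the correct one and the statement is well posed.

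The computation is routine once (\ref{510f06}) is available, since that formula already carries out the pull-back of the connection through $h$. The only step that needs real care — and the single genuine obstacle — is the jet substitution $\si^m_\la\mto\dr_\la h^m$: one must recognize that the first jets of sections factoring through $Y^h$ are exactly the points of $J^1Y$ with $\si^m=h^m$ and $\si^m_\la=\dr_\la h^m$. This is precisely what makes the full pull-back $A_h=h^*A_{Y\Si}$, rather than merely the vertical part evaluated at $\si=h$, the connection governing $D^{A_h}$, and it is what links this theorem to Theorem \ref{LL4}.
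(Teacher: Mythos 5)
Your proof is correct, and in fact the paper offers no proof of this theorem at all --- it is quoted from the references \cite{book09,sard06a,sard14}, where the argument is exactly the coordinate verification you give: the embedding $J^1Y^h\hookrightarrow J^1Y$ sends $\si^m\mto h^m$, $\si^m_\la\mto \dr_\la h^m$, and substituting into $\wt D$ reproduces the covariant differential of the pull-back connection (\ref{510f06}). You correctly identify the one point needing care (the jet substitution and the identification of $V_\Si Y|_{Y^h}$ with the vertical tangent bundle of $Y^h\to X$), so nothing is missing.
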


In view of Theorems \ref{LL4} and \ref{tmp40}, one can assume that
a Lagrangian of matter fields represented by sections of the
composite bundle (\ref{b3225}) factorizes through the vertical
covariant differential $\wt D$ (\ref{7.10}) of some connection
$A_{Y\Si}$ on a fibre bundle $Y\to \Si$. Forthcoming Theorem
\ref{010} shows that this factorization is necessary in order that
a matter field Lagrangian to be gauge invariant.

\begin{theorem} \label{010} \mar{010}
In gauge theory on a principal bundle $P$ whose structure group
$G$ is reducible to a closed subgroup $H$, a matter field
Lagrangian is gauge invariant only if it factorizes through a
vertical covariant differential of some principal connection on
the  $H$-principal bundle $P\to P/H$ (\ref{b3194}).
\end{theorem}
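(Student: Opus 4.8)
The plan is to work in the adapted coordinates $(x^\la,\si^m,y^i)$ on the composite bundle $Y\to\Si\to X$ (\ref{b3225}) and to treat a matter field Lagrangian as a first order density $L=\mathcal L(x^\la,\si^m,y^i,\si^m_\la,y^i_\la)\,d^nx$ on $J^1Y$. Because $Y\to X$ is $P$-associated with the induced representation (\ref{iik}), any vertical automorphism of $P$ — i.e. any gauge transformation — lifts to a vertical vector field $u=u^m\dr_m+u^i\dr_i$ on $Y$, whose components I would read off from (\ref{iik}) as $u^m=\xi^r v^m_r(\si)$ and $u^i=\xi^r\eta^a_r(\si)I^i_a$, where the $\xi^r(x)$ are the gauge parameters, the $v^m_r$ are the fundamental vector fields of the transitive $G$-action on the typical fibre $G/H$ of $\Si\to X$ (\ref{b3193}), $\{I_a\}$ is the representation of $\mathfrak h$ appearing in (\ref{510f05}), and the $\eta^a_r(\si)$ are the $\mathfrak h$-valued compensating terms built into the induced representation. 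Since $u$ is vertical over $X$, gauge invariance reduces to the single condition $J^1u(\mathcal L)=0$ for all $\xi^r$.

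Second, I would prolong $u$ to $J^1Y$ and expand $J^1u(\mathcal L)=0$. The total derivatives $d_\la u^m$ and $d_\la u^i$ produce terms proportional to $\dr_\la\xi^r$ together with terms proportional to $\xi^r$, and because the parameters $\xi^r(x)$ and their derivatives $\dr_\la\xi^r$ are pointwise independent the identity splits. The coefficient of $\dr_\la\xi^r$ yields the key relation
\[
v^m_r\,\frac{\dr\mathcal L}{\dr\si^m_\la}+\eta^a_r\,I^i_a\,\frac{\dr\mathcal L}{\dr y^i_\la}=0
\]
for every $r$ and $\la$, while the coefficient of $\xi^r$ gives a second identity $(\star\star)$ involving only $\dr_m$, $\dr_i$ and the undifferentiated momenta.

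Third — and this is the analytic heart — I would solve the displayed relation for the Higgs-velocity momenta $\dr\mathcal L/\dr\si^m_\la$. Because $G$ acts transitively on $G/H$, the fundamental fields $v^m_r$ span the tangent space at each $\si$, so the system has maximal rank and can be inverted to give $\dr\mathcal L/\dr\si^m_\la=-A^a_m(x,\si)\,I^i_a\,\dr\mathcal L/\dr y^i_\la$ with coefficients $A^a_m$ that are manifestly velocity-independent, being assembled from $v^m_r$ and $\eta^a_r$ alone. A short integrability computation then shows that this is exactly the statement that $\mathcal L$ depends on $(\si^m_\la,y^i_\la)$ only through the combination $y^i_\la-A^a_m I^i_a\si^m_\la$; absorbing any velocity-independent shift $A^a_\la I^i_a$ into the remaining $(x,\si,y)$-dependence, this is precisely factorization through the vertical covariant differential $\wt D$ (\ref{7.10}).

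Finally, I would identify the $A^a_m$ with the $\si$-components of a genuine principal connection on the $H$-principal bundle $P\to\Si$ (\ref{b3194}), completing them by any choice of $x$-components $A^a_\la$ to the form (\ref{510f04}); Theorem \ref{LL4} then recognises (\ref{510f05}) as the associated connection on $Y\to\Si$, and Theorem \ref{tmp40} identifies $\wt D$ with its vertical covariant differential, which gives the conclusion. I expect the main obstacle to be precisely this last identification: showing that the $A^a_m$ produced by inverting the momentum relation are not merely an arbitrary matrix field but transform under change of the atlas $\Psi_P$ with the inhomogeneous law of a principal connection. This is where the structure of the induced representation (\ref{iik}) and the reductive splitting (\ref{g13}) of $\cG$ must enter, e.g. by evaluating the two invariance identities at the distinguished point $\si_0=eH$ — where $v^m_r$ vanishes and $\eta^a_r=\dl^a_r$ for $r$ in $\mathfrak h$ — and propagating the result by $G$-equivariance; verifying the transformation law is routine in principle but is the step most likely to conceal subtleties.
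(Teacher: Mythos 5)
Your proposal is sound in outline, but it takes a genuinely different --- and in fact more ambitious --- route than the paper. The paper's own proof runs in the opposite logical direction: it writes the gauge generator $\up_\xi$ (\ref{rty}), \emph{assumes} the factorization $L_{\rm m}=L_{\rm m}(y^i,k^i_\la)$ with $k^i_\la=\wt D^i_\la$, and observes that the induced transformation of the variables $(y^i,k^i_\la)$ contains no derivatives of the gauge parameters, so that the invariance condition (\ref{510f03}) reduces to a pointwise condition with no obstruction; the necessity claimed in the statement is left implicit in the remark that the $\dr_\la\xi$-terms of $\bL_{J^1\up_\xi}L_{\rm m}$ would otherwise have to cancel. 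You instead carry out the Utiyama-type analysis that the paper skips: splitting $J^1\up_\xi(\mathcal L)=0$ by the pointwise independence of $\xi^r$ and $\dr_\la\xi^r$, inverting the resulting momentum identity via transitivity of $G$ on $G/H$, and integrating the first-order linear system (legitimate because the coefficients $A^a_m(x,\si)$ are velocity-independent) to obtain the factorization. This actually addresses the ``only if'' direction, which the paper's argument does not; what the paper's version buys is brevity and an immediate sufficiency statement. Two points you should still nail down: (i) the map $e_r\mapsto v_r(\si)$ is surjective with the isotropy algebra at $\si$ as kernel, so the ``inversion'' requires a choice of complement, and the leftover kernel equations $\eta^a_s I^i_a\,\dr\mathcal L/\dr y^i_\la=0$ are exactly what makes the combination $A^a_m I^i_a\,\dr\mathcal L/\dr y^i_\la$ independent of that choice --- this should be said explicitly; (ii) the $x$-components $A^a_\la$ are not determined by the $\dr_\la\xi$-coefficient and must be extracted from your identity $(\star\star)$ or chosen and compensated, and the verification that $(A^a_\la,A^a_m)$ obey the inhomogeneous transformation law of a principal connection on $P\to\Si$ --- which you rightly flag as the delicate step, and which is where the induced representation (\ref{iik}) and ultimately the reductive splitting (\ref{g13}) enter --- is the one substantive step that neither your sketch nor the paper's proof actually completes.
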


\begin{proof}
Let $P\to X$ be a principal bundle whose structure group $G$ is
reducible to a closed subgroup $H$. Let $Y$ be the
$P_\Si$-associated bundle (\ref{bnn}). A total configuration space
of gauge theory of principal connections on $P$ in the presence of
matter and Higgs fields is
\mar{510f00}\beq
J^1C\op\times_X J^1Y \label{510f00}
\eeq
where $C=J^1P/G$ is the bundle of principal connections on $P$ and
$J^1Y$ is the first order jet manifold of $Y\to X$. A total
Lagrangian on the configuration space (\ref{510f00}) is a sum
\mar{510f01}\beq
L_{\rm tot}=L_{\rm A} +L_{\rm m} + L_\si \label{510f01}
\eeq
of a gauge field Lagrangian $L_{\rm A}$, a matter field Lagrangian
$L_{\rm m}$ and a Higgs field Lagrangian $L_\si$. The total
Lagrangian $L_{\rm tot}$ (\ref{510f01}) is required to be
invariant with respect to vertical principal automorphisms of a
$G$-principal bundle $P\to X$.  Any vertical principal
automorphism of a $G$-principal bundle $P\to X$, being
$G$-equivariant, also is $H$-equivariant and, thus, it is a
principal automorphism of an $H$-principal bundle $P\to\Si$.
Consequently, it yields an automorphism of the $P_\Si$-associated
bundle $Y$ (\ref{b3225}). Accordingly, every $G$-principal vector
field $\xi$ on $P\to X$ (an infinitesimal generator of a local
one-parameter group of vertical principal automorphisms of $P$)
also is an $H$-principal vector field on $P\to \Si$. It yields an
infinitesimal gauge transformation $\up_\xi$ of a composite bundle
$Y$ seen as a $P$- and $P_\Si$-associated bundle. This reads
\mar{rty}\beq
\up_\xi= \xi^p(x^\m)J_p^m\dr_m + \vt_\xi^a(x^\m,\si^k)I_a^A\dr_A,
\label{rty}
\eeq
where $\{J_p\}$ is a representation of a Lie algebra $\cG$ of $G$
in $G/H$ and $\{I_a\}$ is a representation of a Lie algebra
$\mathfrak h$ of $H$ in $V$. Since gauge and Higgs field
Lagrangians in the absence of matter fields are assumed to be
gauge invariant, a matter field Lagrangian $L_{\rm m}$ also is
separately gauge invariant.  This means that its Lie derivative
along the jet prolongation $J^1\up_\xi$ of the vector field
$\up_\xi$ (\ref{rty}) vanishes, that is,
\mar{510f03}\beq
\bL_{J^1\up_\xi}L_{\rm m}=0. \label{510f03}
\eeq
In order to satisfy the conditions (\ref{510f03}), let us consider
some principal connection $A_\Si$ (\ref{510f04}) on an
$H$-principal bundle $P\to\Si$ and the associated connection
$A_{Y\Si}$ (\ref{510f05}) on $Y\to\Si$. Let a matter field
Lagrangian $L_{\rm m}$ factorize as
\be
L_{\rm m}:J^1Y\op\to^{\wt D}T^*X\op\ot_Y V_\Si Y\to\op\w^nT^*X
\ee
through the vertical covariant differential $\wt D$ (\ref{7.10}).
In this case, $L_{\rm m}$ can be regarded as a function $L_{\rm
m}(y^i,k^i_\la)$ of formal variables $y^i$ and $k^i_\la=\wt
D^i_\la$. The corresponding infinitesimal gauge transformation of
variables $(y^i,k^i_\la)$ reads
\be
\up=\up^i\dr_i +\dr_j\up^i k^j_\la\frac{\dr}{\dr k^i_\la}.
\ee
It is independent of derivatives  of gauge parameters $\xi$.
Therefore, the gauge invariance condition is trivially satisfied.
\end{proof}

However, a problem is that the principal connection $A_\Si$
(\ref{510f04}) on an $H$-principal bundle $P\to P/H$ fails to be a
dynamic variable in gauge theory. Therefore, let us assume that a
Lie algebra of a structure group $G$ satisfies the decomposition
(\ref{g13}). In this case, any $G$-principal connection $A$ on a
principal bundle $P\to X$ yields $H$-principal connections on
reduced $H$-principal subbundles $P^h$ in accordance with Theorem
\ref{redt}. Then one can state the following \cite{sard14}.

\begin{theorem} \label{tmp50} \mar{tmp50}
There exists a connection $A_{Y\Si}$ (\ref{510f05}) on a fibre
bundle $Y\to P/H$ whose restriction $A_h=h^*A_\Si$ onto a
$P^h$-associated bundle $Y^h$ coincides with a principal
connection $\ol A_h$ on $P^h$ generated by a principal connection
$A$ on a principal bundle $P\to X$.
\end{theorem}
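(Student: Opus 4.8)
The plan is to produce the required connection $A_\Si$ on $P\to P/H$ directly from the given $G$-principal connection $A$ on $P\to X$ by extracting its ${\mathfrak h}$-valued component, and then to take $A_{Y\Si}$ to be the connection this induces on $Y\to\Si$ through the representation $\{I_a\}$ of ${\mathfrak h}$ in $V$. The essential point to verify is that this single connection, defined on all of $P$ at once, restricts on each reduced subbundle $P^h$ to precisely the connection $\ol A_h$ furnished by Theorem \ref{redt}.

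First I would regard $A$ as a $\cG$-valued connection form on $P$ and split it as $A=A_{\mathfrak h}+A_{\mathfrak f}$ according to the decomposition (\ref{g13}), setting $A_\Si:=A_{\mathfrak h}$. The main step is to check that $A_{\mathfrak h}$ is genuinely a principal connection on the $H$-principal bundle $P\to\Si$ (\ref{b3194}). Two of the three defining properties are immediate: $A_{\mathfrak h}$ is ${\mathfrak h}$-valued, and for $\ve\in{\mathfrak h}$ the fundamental vector field of the $H$-action on $P\to\Si$ coincides with that of the $G$-action on $P\to X$, so $A_{\mathfrak h}$ reproduces $\ve$. The substantive requirement is $H$-equivariance: one needs $\mathrm{pr}_{\mathfrak h}\circ\mathrm{Ad}_{g^{-1}}=\mathrm{Ad}_{g^{-1}}\circ\mathrm{pr}_{\mathfrak h}$ for $g\in H$, i.e.\ that $\mathrm{Ad}_H$ preserves the splitting $\cG={\mathfrak h}\oplus{\mathfrak f}$. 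Since $\mathrm{Ad}_H{\mathfrak h}\subset{\mathfrak h}$ holds automatically, this reduces to $\mathrm{Ad}_H{\mathfrak f}\subset{\mathfrak f}$, which is exactly what the commutation relation $[{\mathfrak f},{\mathfrak h}]\subset{\mathfrak f}$ in (\ref{g13}) delivers at the infinitesimal level. This is the step I expect to be the crux: it is the only place the reductive hypothesis on $\cG$ enters, and it is precisely what permits a global connection on $P\to\Si$ to exist, rather than merely an unrelated connection on each $P^h$.

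Having fixed $A_\Si=A_{\mathfrak h}$, I would write out its coordinate form (\ref{510f04}) in the atlas $(x^\la,\si^m)$ and let $A_{Y\Si}$ be the associated connection (\ref{510f05}) obtained by feeding the coefficients $A^a_\la,A^a_m$ into the representation $\{I_a\}$ of ${\mathfrak h}$ in $V$. By construction this is a connection of exactly the type to which Theorem \ref{LL4} applies.

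Finally I would identify the restriction. By Theorem \ref{LL4}, for each Higgs section $h$ the pull-back $A_h=h^*A_{Y\Si}$ is the connection on $Y^h$ associated with the pull-back principal connection $h^*A_\Si$ on $P^h$. It then remains to see that $h^*A_\Si=\ol A_h$. The pull-back $h^*A_\Si$ is the restriction of $A_{\mathfrak h}$ to the subbundle $P^h\subset P$ lying over $h(X)$, whereas $\ol A_h$ of Theorem \ref{redt} is the ${\mathfrak h}$-valued component of the restriction of $A$ to the same $P^h$. Since taking the ${\mathfrak h}$-component is a fibrewise linear operation on $\cG$-values, it commutes with restricting a form to a submanifold, so the two coincide. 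Hence $A_h$ is the connection on $Y^h$ associated with $\ol A_h$, which is the claimed statement.
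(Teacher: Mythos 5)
Your argument is correct, and it reaches the conclusion by a genuinely more direct route than the paper's. You define $A_\Si$ globally as the ${\mathfrak h}$-component $\mathrm{pr}_{\mathfrak h}\circ A$ of the connection form on the total space $P$, verify the axioms of a principal connection on $P\to\Si$ (the crux being that $\mathrm{Ad}_H$ preserves the splitting $\cG={\mathfrak h}\oplus{\mathfrak f}$, which is exactly where the hypothesis (\ref{g13}) enters), and then note that restriction of forms to $P^h\subset P$ commutes with the fibrewise linear projection $\mathrm{pr}_{\mathfrak h}$, so that $h^*A_\Si$ is literally the connection $\ol A_h$ of Theorem \ref{redt}; Theorem \ref{LL4} then finishes the job. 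The paper instead argues atlas by atlas: for each Higgs field $h$ it passes to the adapted atlas $\Psi^h$ with $H$-valued transition functions, writes $A=\ol A_h+\Theta$, identifies $\Theta$ with the covariant differential of $h$ via $\Theta^p_\la J^m_p=D^m_\la$, and finally solves the affine equations (\ref{lvv1}) for the coefficients $A^a_\la$, $A^a_m$ pointwise over the configuration space $J^1C\op\times_X J^1Y$. What that longer construction buys is the explicit universal form $\wt\cD$ (\ref{510fh}) in which the gauge potentials $a^r_\la$ and the Higgs jets $\si^m_\la$ appear as independent dynamical variables — the object actually needed for the factorization (\ref{lvv2}). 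Your construction delivers the same thing, since $A\mapsto\mathrm{pr}_{\mathfrak h}\circ A$ is zeroth order in $A$ and in effect exhibits a canonical, manifestly consistent solution of (\ref{lvv1}). The one caveat is that for disconnected $H$ the relation $[{\mathfrak h},{\mathfrak f}]\subset{\mathfrak f}$ yields $\mathrm{Ad}_H{\mathfrak f}\subset{\mathfrak f}$ only on the identity component, so strictly one needs the reductive condition $\mathrm{Ad}(H){\mathfrak f}\subset{\mathfrak f}$; but this is already implicit in Theorem \ref{redt}, so you assume nothing beyond what the paper does.
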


\begin{proof} Let $P^h\subset P$ be a reduced principal subbundle and $\ol A_h$ an $H$-principal
connection on $P^h$ in Theorem  \ref{redt} which is generated by a
$G$-principal connection $A$ on a principal bundle $P\to X$. It is
extended to a $G$-principal connection on $P$ so that $h$ is an
integral section of the associated connection
\be
\ol A_h=dx^\la\ot(\dr_\la +  A_\la^p J^m_p\dr_m)
\ee
on a $P$-associated bundle $\Si\to X$. Let $\Psi_{Y\Si}$ be an
atlas of a $P_\Si$-associated bundle $Y\to\Si$ which is defined by
a family $\{z_\iota\}$ of local sections of $P\to\Si$. Given a
section $h$ of $\Si\to X$, we have a family of sections
$\{z_\iota\circ h\}$ which yields an atlas $\Psi^h$ of a principal
bundle $P\to X$ with $H$-valued transition functions. With respect
to this atlas, a section $h$ takes its values in the center of a
quotient space $G/H$ and the connection $\ol A_h$ reads
\beq
\ol A_h=dx^\la\ot(\dr_\la + A_\la^a e_a). \label{gyu}
\eeq
We have
\mar{00a}\beq
A=\ol A_h +\Theta= dx^\la\ot(\dr_\la + A_\la^a e_a) +\Theta_\la^b
dx^\la\ot e_b, \label{00a}
\eeq
where $\{e_a\}$ is a basis for the Lie algebra ${\mathfrak h}$ and
$\{e_b\}$ is that for ${\mathfrak m}$. Written with respect to an
arbitrary atlas of $P$, the decomposition (\ref{00a}) reads
\be
A=\ol A_h +\Theta, \qquad \Theta=\Theta_\la^p dx^\la\ot e_p,
\ee
and obeys the relation
\be
\Theta_\la^p J_p^m=\nabla^A_\la h^m,
\ee
where $D_\la$ are covariant derivatives relative to the associated
principal connection $A$ on $\Si\to X$. Based on this fact, let
consider the covariant differential
\be
D=D^m_\la dx^\la\ot\dr_m=(\si_\la^m- A^p_\la J_p^m)dx^\la\ot\dr_m
\ee
relative to the associated principal connection $A$ on $\Si\to X$.
It can be regarded as a $V\Si$-valued one-form on the jet manifold
$J^1\Si$ of $\Si\to X$. Since the decomposition (\ref{00a}) holds
for any section $h$ of $\Si\to X$, there exists a $(VP/P)$-valued
(where $VP$ is the vertical tangent bundle of $P\to X$) one-form
\be
\Theta=\Theta_\la^p dx^\la\ot e_p
\ee
on $J^1\Si$ which obeys the equation
\mar{nmb}\beq
\Theta_\la^p J_p^m=D_\la^m. \label{nmb}
\eeq
Then we obtain the $(VP/G)$-valued one-form
\mar{plk}\be
A_H=dx^\la\ot(\dr_\la + (A_\la^p -\Theta_\la^p)e_p) \label{plk}
\ee
on $J^1\Si$ whose pull-back onto each $J^1h(X)\subset J^1\Si$ is
the connection $\ol A_h$ (\ref{gyu}) written with respect to the
atlas $\Psi^h$. The decomposition (\ref{00a}) holds and,
consequently, the equation (\ref{nmb}) possesses a solution for
each principal connection $A$. Therefore, there exists a
$(VP)/G$-valued one-form
\mar{ljl}\beq
A_H=dx^\la\ot(\dr_\la + (a^p_\la- \Theta_\la^p)e_p) \label{ljl}
\eeq
on the product $J^1\Si\op\times_X J^1C$ such that, for any
principal connection $A$ and any Higgs field $h$, the restriction
of $A_H$ (\ref{ljl}) to
\be
J^1h(X)\times A(X)\subset J^1\Si\op\times_X J^1C
\ee
is the connection $\ol A_h$ (\ref{gyu}) written with respect to
the atlas $\Psi^h$. Let us now assume that, whenever $A$ is a
principal connection on a $G$-principal bundle $P\to X$, there
exists a principal connection $A_\Si$ (\ref{510f04}) on a
principal $H$-bundle $P\to \Si$ such that the pull-back connection
$A_h=h^*A_{Y\Si}$ (\ref{510f06}) on $Y^h$ coincides with $\ol A_h$
(\ref{gyu}) for any $h\in\Si(X)$. In this case, there exists
$V_\Si Y$-valued one-form
\mar{510fh}\beq
\wt \cD=dx^\la\ot(y^i_\la - (A^a_m\si_\la^m +A^a_\la)I_a^i)\dr_i
\label{510fh}
\eeq
on the configuration space (\ref{510f00}) whose components are
defined as follows. Given a point
\mar{lvv}\beq
(x^\la, a^r_\m,a^r_{\la\m}, \si^m,\si^m_\la,y^i,y^i_\la) \in
J^1C\op\times_X J^1Y, \label{lvv}
\eeq
let $h$ be a section of $\Si\to X$ whose first jet $j^1_xh$ at
$x\in X$ is $(\si^m,\si^m_\la)$, i.e.,
\be
h^m(x)=\si^m, \qquad \dr_\la h^m(x)=\si^m_\la.
\ee
Let the bundle of principal connections $C$ and the Lie algebra
bundle $VP/G$ be provided with the atlases associated with the
above mentioned atlas $\Psi^h$. Then we write
\mar{lvv1}\beq
A_h=\ol A_h, \qquad A^a_m\si_\la^m +A^a_\la = a^a_\la-
\Theta_\la^a. \label{lvv1}
\eeq
These equations for functions $A^a_m$ and $A^a_\la$ at the point
(\ref{lvv}) have a solution because $\Theta_\la^a$ are affine
functions in the jet coordinates $\si_\la^m$.
\end{proof}

Given solutions of the equations (\ref{lvv1}) at all points of the
configuration space (\ref{510f00}), we require that a matter field
Lagrangian factorizes as
\mar{lvv2}\beq
L_{\rm m}:J^1C\op\times_XJ^1Y\op\to^{\wt D}T^*X\op\ot_Y V_\Si
Y\to\op\w^nT^*X \label{lvv2}
\eeq
through the form $\wt D$ (\ref{510fh}). As a result, we obtain a
gauge theory of gauge potentials of a group $G$, matter fields
with an exact symmetry subgroup $H\subset G$ and classical Higgs
fields on the configuration space (\ref{lvv}).

As was mentioned above, an example of classical Higgs fields is a
metric gravitational field in gauge gravitation theory on natural
bundles with spontaneous symmetry breaking caused by the existence
of Dirac spinor fields with the exact Lorentz symmetry group
\cite{iva,sard11}. Describing spinor fields in terms of the
composite bundle (\ref{b3225}), we get their Lagrangian
(\ref{lvv2}) in the presence of a general linear connection which
is invariant under general covariant transformations
\cite{book09,sard11}. Classical gauge fields also are considered
in gauge theory on gauge-natural bundles \cite{pal} and in Stelle
-- West gravitation theory \cite{lecl}.

Let us note however that the symmetry breaking mechanism of
Standard Model differs from that we consider here. Matter fields
in Standard Model admit a total group of symmetries which are
broken because of the existence of a background Higgs vacuum field
\cite{nov}.

\end{document}